\newtheorem{theo}{Theorem}
\newtheorem{lemma}[theo]{Lemma}
\newtheorem{defi}[theo]{Definition}
\theoremstyle{definition}
\numberwithin{theo}{section}
\newtheorem{rem}[theo]{Remark}
\author{Delio Mugnolo}
\title{A Frucht's theorem for quantum graphs}
\address{Institut f\"ur Analysis, Universit\"at Ulm, Helmholtzstra{\ss}e 18, D-89081 Ulm, Germany}
\subjclass[2000]{05C25,35B06,81Q35}
\keywords{Symmetries for evolution equations; Quantum graphs; Algebraic graph theory}
\email{delio.mugnolo@uni-ulm.de}
\begin{document}

\maketitle

\begin{abstract}
A celebrated theorem due to R. Frucht states that, roughly speaking, each group is isomorphic to the symmetry group of some graph. By ``symmetry group'' the group of all graph automorphisms is meant. We provide an analogue of this result for \emph{quantum} graphs, i.e., for Schr\"odinger equations on a metric graph, after suitably defining the notion of symmetry.
\end{abstract}

\section{Introduction}

Beginning with the second half of the XIX century, symmetries have played an important r\^ole in analysis. A key observation that paved the road to the seminal work of Lie, Klein and Noether is the group structure typical of symmetries. Ever since, the notion of symmetry has been very important in the theory of differential equations, but has also appeared in further contexts eventually leading to the typical general question:

\medskip
\emph{Let $\Gamma$ be a group and $C$ be a category. Is there an object in $C$ whose group of symmetries is isomorphic to $\Gamma$?}

Possibly, the earliest example of mathematical problem related to the above question is the following, concerning the category of sets. Clearly, the notion of symmetry is not univocal and strongly depends from the considered category. In the case of sets, the symmetry group is by definition simply the group of all permutations of the set's elements.

\begin{itemize}
 \item 
\emph{Let $\Gamma$ be a group and $C$ be the category of sets. Is there an object of $C$ such that (a subgroup of) its symmetry group $S_n$ is isomorphic to $\Gamma$?}
\end{itemize}

Yes, there is. Indeed, by Cayley's theorem every group $\Gamma$ is isomorphic to a subgroup of the symmetric group on $\Gamma$ (defined as the group of all bijections on $\Gamma$). 

Further examples include the following ones.

\begin{itemize}
 \item 
\emph{Let $\Gamma$ be a group and $C$ be the category of topological spaces. Is there an object $M$ of $C$ whose symmetry group (i.e., the group of all homeomorphisms on $M$) is isomorphic to $\Gamma$?}
\end{itemize}

Again, the answer is positive. Actually, such a topological space can be chosen to be a complete, connected, locally connected, 1-dimensional metric space: this has been proved by de Groot~\cite{Gro59}.

Symmetries also play an important r\^ole in graph theory. By definition, a {\bf symmetry} (or {\bf automorphism}) of a graph $G$ is a permutation of nodes of $G$ that preserves adjacency. Equivalently, a permutation is a symmetry of $G$ if and only if it commutes with the adjacency matrix of $G$. The group of all symmetries of a graph $G$ is usually denoted by $A(G)$. With this definition, the following can be formulated.

\begin{itemize}
 \item 
\emph{Let $\Gamma$ be a \emph{finite} group and $C$ the category of (simple) connected graphs. Is there an object $G$ of $C$ whose symmetry group $A(G)$ is isomorphic to $\Gamma$?}
\end{itemize}

Yes, there is. This affermative answer is the statement of Frucht's classical theorem~\cite{Fru38} (in fact, there exist infinitely many, pairwise non-isomorphic, finite graphs $G$ such that $A(G)\cong \Gamma$). This assertion has been significantly strengthened by a later work of Sabidussi~\cite{Sab57}, who has shown that these graphs can be constructed to be $k$-regular for any $k\ge 3$ and to have arbitrary connectivity and chromatic number.\footnote{Both Frucht and Sabidussi begin with the construction of a basic graph $G$ related to the Cayley graph of the group $\Gamma$ and then extend this construction to infinitely many further graphs by suitably decorating $G$. However, already the basic graph $G$ is in general highly redundant: e.g., the $3$-regular graph constructed in order to realize the symmetric group $S_n$ has 
$8\cdot n!$ nodes, whereas the Petersen graph' symmetry group is isomorphic to $S_5$ but the graph has only $10$ nodes.}
We also mention the following related results.
\begin{itemize}
 \item Let $\Gamma$ be a finite group. Then for any $k\in\{3,4,5\}$ there exist uncountably many, pairwise non-isomorphic, $k$-regular connected infinite  (simple) graphs $G$ such that $A(G)\cong \Gamma$~\cite{Izb59}. 
\item Let $\Gamma$ be an infinite group. Then there exists uncountably many, pairwise non-isomorphic (simple) connected infinite graphs $G$ such that $A(G)\cong \Gamma$~\cite{Gro59,Sab60}.
\end{itemize}

\section{Quantum graphs}

Let $H$ be a separable complex Hilbert space and $A$ a self-adjoint operator on $H$. By Stone's theorem, the abstract Cauchy problem of Schr\"odinger-type
\begin{equation}\label{sch}
iu'(t)=Au(t),\quad t\in\mathbb R,\qquad u(0)=u_0\in H,
\end{equation}
is well-posed and the solution $u$ is given by $u(t):=e^{itA}u_0$, where $(e^{itA})_{t\in\mathbb R}$ denotes the $C_0$-group of unitary operators on $H$ generated by $A$.

In the following we will consider closed operators $\Sigma:D(\Sigma)\to H$ such that
\begin{equation}\label{self}
\Sigma e^{itA}f=e^{itA}\Sigma f \qquad \hbox{for all }t\in\mathbb R\hbox{ and }f\in D(\Sigma)
\end{equation}

In theoretical physics, a unitary operator $\Sigma$ satisfying~\eqref{self} is said to be a {\bf symmetry} of the system described by~\eqref{sch}.

It has been observed in~\cite{CarMugNit08} that if $A$ is self-adjoint and dissipative (and hence it generates both a $C_0$-group $(e^{itA})_{t\in\mathbb R}$ of bounded linear unitary operators on $H$ and a $C_0$-semigroup $(e^{tA})_{t\ge 0}$ of linear contractive operators on $H$), then a closed subspace of $H$ is invariant under $(e^{itA})_{t\in\mathbb R}$ if and only if it is invariant under $(e^{tA})_{t\ge 0}$. Observe that self-adjoint dissipative operators are always associated with a (symmetric, $H$-elliptic, continuous) sesquilinear form, cf.~\cite{Ouh05}. 
The following criterion holds.

\begin{lemma}\label{ouha}
Let $a$ be a sesquilinear, symmetric, $H$-elliptic, continuous form with dense domain $D(a)$ associated with an operator $A$ on $H$. Consider a closed operator $\Sigma$ on $H$. 
Then $\Sigma$ satisfies~\eqref{self} if and only if
\begin{itemize}
 \item both $\Sigma ,\Sigma ^*$ leave $D(a)$ invariant and moreover
\item for all $f,g\in D(a)$
$$a(Lf+\Sigma ^*Rg,\Sigma ^*\Sigma Lf-\Sigma ^*Rg)=a(\Sigma Lf+\Sigma \Sigma ^*Rg,\Sigma Lf-Rg), $$
\end{itemize}
where $L:=(I+\Sigma ^*\Sigma )^{-1}$. $R:=(I+\Sigma \Sigma ^*)^{-1}$ and hence $I-R=\Sigma \Sigma ^*R$ and $I-L=\Sigma ^*\Sigma L$.
\end{lemma}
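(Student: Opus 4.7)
My plan is to reduce this to the standard Ouhabaz invariance criterion applied not on $H$ but on the product space $H\oplus H$, using the graph of $\Sigma$ as the invariant subspace.

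First, I would observe that the commutation relation~\eqref{self} is equivalent to the statement that the graph $G(\Sigma):=\{(f,\Sigma f):f\in D(\Sigma)\}\subset H\oplus H$ is invariant under the diagonal unitary group $\bigl(e^{itA}\oplus e^{itA}\bigr)_{t\in\mathbb R}$; this is immediate because $\Sigma$ is closed and the diagonal action maps pairs $(f,\Sigma f)$ to $(e^{itA}f,e^{itA}\Sigma f)$. The diagonal group is generated by the self-adjoint operator $A\oplus A$, which is itself associated with the product form $\tilde a:=a\oplus a$ on the domain $D(a)\oplus D(a)$ inside $H\oplus H$; this form is again symmetric, $H$-elliptic and continuous. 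By the observation from~\cite{CarMugNit08} quoted before the statement of the lemma, invariance of the closed subspace $G(\Sigma)$ under the unitary group is equivalent to invariance under the contraction semigroup $\bigl(e^{tA}\oplus e^{tA}\bigr)_{t\ge 0}$.

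Second, I would invoke Ouhabaz's classical invariance criterion (cf.~\cite{Ouh05}): a closed subspace $V$ of $H\oplus H$ is invariant under $\bigl(e^{t(A\oplus A)}\bigr)_{t\ge 0}$ if and only if the orthogonal projection $P$ onto $V$ leaves $D(\tilde a)=D(a)\oplus D(a)$ invariant and satisfies $\tilde a(Px,(I-P)x)=0$ for every $x\in D(\tilde a)$. The key ingredient is then the classical von Neumann formula for the orthogonal projection onto the graph of a closed operator, namely
\[
P=\begin{pmatrix} L & \Sigma^*R\\ \Sigma L & \Sigma\Sigma^*R\end{pmatrix},\qquad I-P=\begin{pmatrix} \Sigma^*\Sigma L & -\Sigma^*R\\ -\Sigma L & R\end{pmatrix},
\]
where the identities $I-L=\Sigma^*\Sigma L$ and $I-R=\Sigma\Sigma^*R$ make the second matrix explicit. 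Writing the invariance of $D(a)\oplus D(a)$ under $P$ componentwise on vectors $(f,g)\in D(a)\oplus D(a)$ and exploiting the freedom to choose $f$ or $g$ to be $0$, I would verify that this invariance is equivalent to the statement that $\Sigma$ and $\Sigma^*$ themselves leave $D(a)$ invariant (together with $L$ and $R$, which then follows automatically from the two resolvent identities).

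Third, I would plug the matrix form of $P$ and $I-P$ into the orthogonality condition $\tilde a(Px,(I-P)x)=0$ for $x=(f,g)\in D(a)\oplus D(a)$. Since $\tilde a$ is the direct sum $a\oplus a$, this expands into the sum
\[
a\bigl(Lf+\Sigma^*Rg,\,\Sigma^*\Sigma Lf-\Sigma^*Rg\bigr)+a\bigl(\Sigma Lf+\Sigma\Sigma^*Rg,\,-\Sigma Lf+Rg\bigr)=0,
\]
and moving the second summand to the right-hand side yields precisely the bilinear identity in the statement.

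The main obstacle I anticipate is purely bookkeeping: showing that $P$ preserves $D(a)\oplus D(a)$ is exactly the condition that $\Sigma$ and $\Sigma^{*}$ preserve $D(a)$. One direction is clear, while the converse requires decomposing $Lf=f-\Sigma^*\Sigma L f$ and $Rg=g-\Sigma\Sigma^*Rg$ and checking that each piece lies in $D(a)$; here one has to use that $L,R$ are bounded everywhere and map $H$ into $D(\Sigma^*\Sigma)$ and $D(\Sigma\Sigma^*)$ respectively, together with the hypothesis on the invariance of $D(a)$ under $\Sigma,\Sigma^*$.
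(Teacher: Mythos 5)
Your proposal is correct and follows essentially the same route as the paper's own proof: reduction to invariance of ${\rm Graph}(\Sigma)$ in $H\times H$ under the diagonal group, the equivalence of group and semigroup invariance from~\cite{CarMugNit08}, von Neumann's formula for the projection onto the graph, and Ouhabaz's criterion applied to the form $a\oplus a$. The only point where the paper is slightly more explicit is the entry-by-entry argument showing that $P$ preserving $D(a)\times D(a)$ is equivalent to invariance of $D(a)$ under $\Sigma$ and $\Sigma^*$, which you correctly flag as the remaining bookkeeping.
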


It follows immediately that if in particular $\Sigma$ is unitary, then it is a symmetry of the system described by~\eqref{sch} if and only if 
\begin{itemize}
 \item $\Sigma$ leaves $D(a)$ invariant and moreover
\item for all $f\in D(a)$
$$a(\Sigma f,\Sigma f)=a(f,f).$$
\end{itemize}

\begin{proof}
The proof of (1) is based on the observation that 
$$\Sigma e^{itA}=e^{itA}\Sigma \qquad\hbox{for all }t\in\mathbb R$$
if and only if the graph of $\Sigma$, i.e., the closed subspace
$${\rm Graph}(\Sigma):=\left\{\begin{pmatrix}x\\ \Sigma x\end{pmatrix}\in D(\Sigma)\times H\right\}$$
is invariant under the matrix group
$$\begin{pmatrix}
e^{itA} & 0\\ 0 & e^{itA}
\end{pmatrix},\qquad t\in\mathbb R,$$
on the Hilbert space $H\times H$, or equivalently under the matrix semigroup
$$\begin{pmatrix}
e^{tA} & 0\\ 0 & e^{tA}
\end{pmatrix},\qquad t\ge 0,$$
associated with the sesquilinear form ${\bf a}=a\oplus a$ with dense domain $D({\bf a}):=D(a)\times D(a)$.
A classical formula due to von Neumann yields that the orthogonal projection of $H\times H$ onto ${\rm Graph}(B)$ is given by
$$P_{{\rm Graph}(\Sigma)}=\begin{pmatrix}
(I+\Sigma ^*\Sigma )^{-1} & \Sigma ^*(I+\Sigma \Sigma ^*)^{-1}\\
\Sigma (I+\Sigma ^*\Sigma )^{-1} & I-(I+\Sigma \Sigma ^*)^{-1}
 \end{pmatrix}:=\begin{pmatrix}
L & \Sigma ^*R\\
\Sigma L & I-R
 \end{pmatrix},$$
cf.~\cite[Thm.~23]{Neu97}. The remainder of the proof is based on a known criterion by Ouhabaz, see~\cite[\S 2.1]{Ouh05}, stating that a closed subspace $Y$ of a Hilbert space is invariant under a semigroup associated with a form $b$ with domain $D(b)$ if and only if
\begin{itemize}
\item  the orthogonal projection $P_Y$ onto $Y$ leaves $D(b)$ invariant and 
\item $a(P_Y f,f-P_Y f)=0$ for all $f\in D(b)$. 
\end{itemize} 
Clearly 
$$P_{{\rm Graph}(\Sigma)} D({\bf a})\subset D({\bf a})$$ 
if and only if each of the four entries of $P_{{\rm Graph}(\Sigma)}$ leave $D(a)$ invariant. In particular, the upper-left entry leaves $D(a)$ invariant if and only if $\Sigma ^* \Sigma $ leaves $D(a)$ invariant, but then the lower-left entry leaves $D(a)$ invariant if and only if additionally $\Sigma $ leaves $D(a)$ invariant, too. Similarly, the lower-right entry leaves $D(a)$ invariant if and only if $\Sigma \Sigma ^*$ leaves $D(a)$ invariant, but then the upper-right (resp., lower-left) entry leaves $D(a)$ invariant if and only if additionally $\Sigma ^*$ (resp., $\Sigma$) leaves $D(a)$ invariant, too. Since however invariance of $D(a)$ under $\Sigma ,\Sigma ^*$ already implies invariance of $D(a)$ under $\Sigma \Sigma ^*,\Sigma ^*\Sigma $, the claim follows -- the second condition is in fact just a plain reformulation of Ouhabaz's second condition.
\end{proof}

A special class of Cauchy problems is given by so-called {quantum graphs}. In its easiest form (to which we restrict ourselves for the sake of simplicity), a {\bf quantum graph} $\mathcal G$ is a pair $(G,L)$, where $G=(V,E)$ is a (possibly infinite) simple connected graph and $L$ is an elliptic operator on $L^2(0,1)$. For technical reasons, edges have to be directed (in an arbitrary way which is not further relevant for the problem) and given a metric structure. Hence, we identify each edge $\overrightarrow{vw}$ with the interval $[0,1]$ and write $f(v):=f(0)$ and $f(w):=f(1)$ whenever we consider a function $f:\overrightarrow{vw}\equiv [0,1]\to {\mathbb C}$. 

To each quantum graph is naturally associated a system of Schr\"odinger type equations
$$i\frac{\partial \psi_e}{\partial t}(t,x)=\frac{\partial^2 \psi_e}{\partial x^2}(t,x),\qquad t\in\mathbb R,\; x\in (0,1),\; e\in E,$$
where $\psi_e:\overrightarrow{vw}\equiv e\equiv [0,1]\to {\mathbb C}$, i.e., $\psi$ are vector-valued wavefunctions from $[0,1]\to \ell^2(E)$. The natural operator theoretical setting of this problem includes the Hilbert space $H=L^2(0,1;\ell^2(E))\cong \prod_{e\in E}L^2(0,1;{\mathbb C})$ and the operator matrix defined by
$$\Delta \psi:={\rm diag}\left(\frac{\partial^2 \psi_e}{\partial x^2}\right)_{e\in E}.$$
Naturally, some compatibility conditions have to be satisfied in the boundary, i.e., in the nodes of the graph. These are given by
\begin{equation}\label{node1}
\psi_e(t,v)=\psi_f(t,v) \hbox{ for all } t\in\mathbb R,\; v\in V,\hbox{ whenever } e,f\sim v 
\end{equation}
(here and in the following we write $e\sim v$ if the edge $e$ is incident in the node $v$) and moreover
$$\sum_{e\sim v}\frac{\partial \psi_e}{\partial n}(t,v)=0\qquad t\in\mathbb R,\; v\in V.$$
Here $\frac{\partial \psi_e}{\partial n}$ denotes the normal derivative of $\psi_e$ at $0$ or $1$.

It can be easily shown that $\Delta$ is associated with the sesquilinear, symmetric, $H$-elliptic, continuous form $a$ defined by
$$a(\psi,\phi):=\int_0^1 (\psi'(x)|\phi'(x))_{\ell^2(E)}dx$$
with form domain
$$D(a):=\{\psi\in H^1(0,1;\ell^2(E)):\psi_e(v)=\psi_f(v)\qquad \hbox{for all }v\in V,\;\hbox{whenever } e,f\sim v 
\}.$$

Consistently with the general definition, a symmetry of a quantum graph $\mathcal G$ is a unitary operator on $H$ that commutes with the unitary group generated by $i\Delta$. Symmetries of $\mathcal G$ define a group, which we denote by ${\mathfrak A}({\mathcal G})$.

\section{Symmetries of quantum graphs}

\begin{itemize}
\item \emph{Let $\Gamma$ be a group and $C$ the category of quantum graphs. Is there an object $\mathcal G$ of $C$ whose symmetry group ${\mathfrak A}({\mathcal G})$ is isomorphic to $G$?}
\end{itemize}

The above question can be easily answered in the negative. Since $U(1)$ is always a subgroup of the symmetry group ${\mathfrak A}({\mathcal G})$ of a quantum graph, no finite group $\Gamma$ can be isomorphic to ${\mathfrak A}({\mathcal G})$. The above isomorphy condition can be relaxed, though.
 

In the proof of our main theorem we will need the notion of {\bf edge symmetry} of a graph: by definition, this is a permutation of edges of $G$ that preserves edge adjacency (or equivalently, a permutation of edges that commutes with the adjacency matrix of the line graph of $G$). In other words, by definition a permutation $\tilde{\pi}$ on $E$ is an edge symmetry if $\tilde{\pi}(e),\tilde{\pi}(f)$ have a common endpoint whenever $e,f\in E$ do. Edge symmetries forma group which is usually denoted by $A^*(G)$.

 Now, observe that each symmetry $\pi\in A(G)$ naturally induces an edge symmetry $\tilde{\pi}\in A^*(G)$: simply define
$$\tilde{\pi}(e):=(\pi(v),\pi(w))\qquad\hbox{whenever }e=(v,w).$$
While clearly $A'(G):=\{\tilde{\pi}:\pi \in A(G)\}$ (whose elements we call {\bf induced edge symmetries}) is a group, it can be strictly smaller than $A(G)$: simply think of the graph $G$ defined by
\begin{center}
\begin{tikzpicture}[style=thick]
\draw (0,0) -- (0,1.3);
\draw[fill] (0,0) circle (2pt);
\draw[fill] (0,1.3) circle (2pt);
\draw[fill] (1.3,0) circle (2pt);
\draw[fill] (1.3,1.3) circle (2pt);
\end{tikzpicture}
\end{center}
for which $A(G)=C_2\times C_2$ (independent switching of the adjacent nodes and/or of the isolated nodes) but $A'(G)$ is trivial. However, this is an exceptional case. The following has been proved by Sabidussi in the case of a finite group, but its proof (see~\cite[Thm. 1]{HarPal68}) carries over verbatim to the case of an infinite graph.

\begin{lemma}\label{harary}
Let $G$ be a simple graph. Then the groups $A(G)$ and $A'(G)$ are isomorphic provided that $G$ contains at most one isolated node and no isolated edge.
\end{lemma}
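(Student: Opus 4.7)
The plan is to show that the natural map $\Phi:A(G)\to A'(G)$, $\pi\mapsto\tilde\pi$, is a group isomorphism. Since $(\widetilde{\pi\sigma})(\{v,w\})=\{\pi\sigma(v),\pi\sigma(w)\}=\tilde\pi(\tilde\sigma(\{v,w\}))$ for every edge $\{v,w\}$, the map $\Phi$ is a group homomorphism, and it is surjective by the very definition of $A'(G)$. Hence the whole content of the lemma reduces to checking that, under the stated hypotheses, $\ker\Phi$ is trivial.

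Accordingly, I would fix $\pi\in A(G)$ with $\tilde\pi=\mathrm{id}_E$ -- equivalently, $\{\pi(v),\pi(w)\}=\{v,w\}$ for every $\{v,w\}\in E$ -- and prove $\pi=\mathrm{id}_V$ by a case analysis on $\deg v$. If $v$ has degree at least two, pick two distinct neighbours $w_1\ne w_2$: applying the edge condition to $\{v,w_1\}$ and to $\{v,w_2\}$ yields $\pi(v)\in\{v,w_1\}\cap\{v,w_2\}=\{v\}$, using simpleness of $G$ to exclude $v\in\{w_1,w_2\}$, so $\pi(v)=v$. If instead $v$ has degree one with unique incident edge $\{v,w\}$, then the no-isolated-edge hypothesis forces $w$ to have degree at least two, so the previous step gives $\pi(w)=w$, and then $\{\pi(v),w\}=\{v,w\}$ forces $\pi(v)=v$.

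Finally, the at-most-one-isolated-node hypothesis dispatches the last case: since $\pi$ is a bijection of $V$ that fixes every vertex of positive degree, the unique isolated vertex (if any) has nowhere else to go and must also be fixed. This gives $\pi=\mathrm{id}_V$, hence $\ker\Phi=\{\mathrm{id}_V\}$ and the lemma follows. I do not anticipate a genuine obstacle here: the argument is a straightforward degree-based case analysis, and the two hypotheses are visibly sharp, as the four-vertex example drawn just above the statement already shows -- two isolated vertices, and the two endpoints of an isolated edge, are exactly the configurations that admit a non-trivial vertex permutation with trivial induced edge permutation.
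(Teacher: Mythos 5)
Your argument is correct and complete. The reduction to showing $\ker\Phi$ is trivial is sound (surjectivity of $\Phi$ is indeed built into the definition of $A'(G)$, and the homomorphism property is immediate), and each case of the degree analysis checks out: for $\deg v\ge 2$ the simpleness of $G$ guarantees $v\notin\{w_1,w_2\}$ so the intersection $\{v,w_1\}\cap\{v,w_2\}$ collapses to $\{v\}$; for $\deg v=1$ the no-isolated-edge hypothesis hands you a degree-$\ge 2$ neighbour $w$, and injectivity of $\pi$ (or just $v\ne w$) then pins $\pi(v)=v$; and the at-most-one-isolated-node hypothesis handles the degree-zero vertices. The main difference from the paper is that the paper does not prove this lemma at all: it attributes the result to Sabidussi and cites the proof of Theorem~1 in Harary--Palmer, adding only the remark that the argument carries over verbatim to infinite graphs. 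Your write-up supplies the missing elementary proof, and it is visibly independent of any finiteness assumption on $G$, which is precisely the point the paper needs for the infinite case of Theorem~\ref{main}; your closing observation that the two hypotheses are exactly the obstructions (two isolated vertices, or the endpoints of a $K_2$ component, being swappable without moving any edge) matches the four-vertex counterexample displayed just before the lemma. No gaps.
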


Hence, $A(G)\cong A'(G)$ in any connected graph with at least 3 nodes, and in particular in any connected, $3$-regular graph. 

\begin{theo}\label{main}
Let $\Gamma$ be a (possibly infinite) group. Then there exists a quantum graph $\mathcal G$ such that $\Gamma$ is isomorphic to a subgroup of ${\mathfrak A}({\mathcal G})$.
\end{theo}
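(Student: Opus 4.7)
The plan is to reduce the statement to classical Frucht-type theorems and then lift each graph automorphism to a symmetry of the associated quantum-graph Schrödinger system by means of the simplified criterion derived from Lemma~\ref{ouha}.

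First, I invoke the appropriate classical realization result. If $\Gamma$ is finite, Sabidussi's refinement of Frucht's theorem delivers a $3$-regular connected simple graph $G$ with $A(G)\cong \Gamma$; if $\Gamma$ is infinite, the de Groot--Sabidussi result cited in the introduction furnishes a connected simple (infinite) graph $G$ with the same property. In either case $G$ has at least three nodes and no isolated edge, so Lemma~\ref{harary} applies and the natural homomorphism $\pi\mapsto\tilde\pi$ is an \emph{isomorphism} $A(G)\cong A'(G)$. I then equip $G$ with an arbitrary but fixed orientation and consider the quantum graph $\mathcal G=(G,-\partial^2/\partial x^2)$ with the Kirchhoff--Neumann vertex conditions, so that the form $a$ and its domain $D(a)$ are exactly those displayed in Section~2.

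Next, to each $\pi\in A(G)$ I associate a candidate symmetry $\Sigma_\pi$ on $H=L^2(0,1;\ell^2(E))$ defined edgewise as follows: for every edge $e=\overrightarrow{vw}$, set
$$(\Sigma_\pi\psi)_{\tilde\pi(e)}(x):=\psi_e(x)\quad\text{if }\tilde\pi(e)=\overrightarrow{\pi(v)\pi(w)},\qquad (\Sigma_\pi\psi)_{\tilde\pi(e)}(x):=\psi_e(1-x)\quad\text{otherwise.}$$
Equivalently, I record the orientation discrepancy by a sign $\epsilon_\pi(e)\in\{0,1\}$ and reflect the argument whenever $\epsilon_\pi(e)=1$. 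Since $\Sigma_\pi$ is the composition of the permutation of edges $\tilde\pi$ with the isometric involutions $x\mapsto 1-x$ on the flipped coordinates, it is manifestly unitary on $H$.

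Now I verify the three hypotheses of the simplified form of Lemma~\ref{ouha} for unitary $\Sigma$. Invariance of $D(a)$: continuity at a node $v$ of $\Sigma_\pi\psi$ reduces, after tracking the reflections, to continuity of $\psi$ at $\pi^{-1}(v)$, which holds. Preservation of the form: since reflection does not change $\int_0^1|\psi_e'|^2\,dx$, a change of variables plus a relabeling by $\tilde\pi^{-1}$ yields $a(\Sigma_\pi\psi,\Sigma_\pi\psi)=a(\psi,\psi)$. Hence $\Sigma_\pi\in\mathfrak A(\mathcal G)$.

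Finally I check that $\pi\mapsto\Sigma_\pi$ is an injective group homomorphism. A short orientation bookkeeping gives the cocycle identity $\epsilon_{\pi\sigma}(e)=\epsilon_\pi(\tilde\sigma(e))+\epsilon_\sigma(e)\pmod 2$, from which $\Sigma_{\pi\sigma}=\Sigma_\pi\Sigma_\sigma$ follows directly from the definition. If $\Sigma_\pi=I$, then in particular $\tilde\pi$ fixes every edge, hence $\tilde\pi=\mathrm{id}$ in $A'(G)$; by Lemma~\ref{harary} this forces $\pi=\mathrm{id}$. Composing with the isomorphism $\Gamma\cong A(G)$ supplied in the first step exhibits $\Gamma$ as a subgroup of $\mathfrak A(\mathcal G)$, as required.

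The main obstacle I foresee is the orientation bookkeeping: edges of $G$ carry an arbitrary orientation that need not be respected by $\pi$, so the lift to $H$ must incorporate the reflections $x\mapsto 1-x$ correctly. All verifications (unitarity, $D(a)$-invariance, form invariance, and the homomorphism property) hinge on this cocycle of signs behaving consistently under composition; once the $\epsilon_\pi$'s are in place, the remaining checks are routine and rely only on Lemma~\ref{ouha} and Lemma~\ref{harary}.
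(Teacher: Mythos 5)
Your proof follows essentially the same route as the paper: realize $\Gamma$ as $A(G)$ via Frucht (finite case) or de Groot--Sabidussi (infinite case), pass to induced edge symmetries via Lemma~\ref{harary}, lift each $\tilde\pi$ to a unitary permutation operator on $H=L^2(0,1;\ell^2(E))$, and verify the two conditions of the unitary specialization of Lemma~\ref{ouha} (form invariance is immediate since the operator acts independently of the space variable; $D(a)$-invariance reduces to continuity at the permuted nodes). The one place you go beyond the paper is the orientation bookkeeping: the paper simply sets $(\Pi f)_e:=f_{\tilde\pi(e)}$ and computes $\Pi\psi_e(v)=\psi_{\tilde\pi(e)}(\pi(v))$, tacitly assuming the arbitrary edge orientations are compatible with $\pi$, whereas your reflections $x\mapsto 1-x$ together with the cocycle $\epsilon_{\pi\sigma}(e)=\epsilon_\pi(\tilde\sigma(e))+\epsilon_\sigma(e)\pmod 2$ make the node-continuity check and the homomorphism property $\Sigma_{\pi\sigma}=\Sigma_\pi\Sigma_\sigma$ actually rigorous; this is a genuine (if minor) improvement in care rather than a different method.
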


\begin{proof}
To begin with, apply Frucht's theorem or its infinite generalization and consider some graph $G$ such that $A(G)$ is isomorphic to $\Gamma$: If $\Gamma$ is infinite consider an infinite connected graph yielded by the results of de Groot~\cite{Gro59} and Sabidussi~\cite{Sab60}, whereas if $\Gamma$ is finite consider a $3$-regular connected graph given by Frucht's theorem~\cite{Fru49}. In any case, $G$ is connected and has more than 3 nodes, hence by Lemma~\ref{harary} $A(G)\cong A'(G)$.

Now, any $\tilde{\pi}\in A'(G)$ can be associated with a bounded linear operator $\Pi$ on $H=L^2(0,1;\ell^2(E))$ defined by
\begin{equation}\label{induced}
 (\Pi f)_e:=f_{\tilde{\pi}(e)},\qquad e\in E.
\end{equation}
Such an operator is clearly unitary, since $\tilde{\pi}$ is a permutation, and the identifications
$$\pi\mapsto \tilde{\pi}\mapsto \Pi$$
define a group
$$\{\Pi\in {\mathcal L}(H):\pi\in A(G)\}\cong A(G)$$
of unitary operators on $H$. 

It remains to prove that each such $\Pi$ commutes with the unitary group $(e^{it\Delta})_{t\ge 0}$, or rather with its generator $\Delta$. In order to apply Lemma~\ref{ouha}, it suffices to observe that $\Pi$ is unitary and not dependent on the space variable, so that the second condition is trivally satisfied.

Finally, observe that if $\psi\in D(a)$, then clearly $\Pi \psi\in H^1(0,1;\ell^2(E))$ and moreover for all $v\in V$ and all $e,f\sim v$ one has
$$\Pi\psi_e(v)=\psi_{\tilde{\pi}(e)}(\pi(v))=\psi_{\tilde{\pi}(f)}(\pi(v))=\Pi\psi_f(v),$$
by definition of edge symmetry induced by $\pi$ and since by assumption
$$\psi_{e}(v)=\psi_{f}(v).$$
This yields invariance of $D(a)$ under $\Pi$ and concludes the proof.
\end{proof}

\begin{rem}
Clearly, any edge permutation $\tilde{\pi}$ induces a unitary operator $\Pi$ on $H$ defined as in~\eqref{induced}, but it generally ignores the adjacency structure of a graph. One could imagine that a general edge \emph{symmetry} $\tilde{\pi}\in A^*(G)$ may then suffice in the last part of the proof of Theorem~\ref{main}, in order to deduce that $\Pi$ is a symmetry of ${\mathfrak A}({\mathcal G})$. This is tempting, because on the one hand edge symmetries seem to be more general than induced edge symmetries (i.e., than elements of $A'(G)$), on the other hand they still preserve adjacency. However, general edge symmetries are not fit for our framework, as the following example shows: the permutation $\tilde{\pi}\equiv (e_1\; e_4)$ (which is not induced by any of the two (node) symmetries) is clearly an edge symmetry of the graph
\begin{center}
\begin{tikzpicture}[style=thick]
\draw (0,0) -- (0,1.4) -- (1,0.7) -- cycle;
\draw (2,0.7) -- (1,0.7);
\draw[fill] (0,0) circle (2pt);
\draw[fill] (0,1.4) circle (2pt);
\draw[fill] (1,0.7) circle (2pt);
\draw[fill] (2,0.7) circle (2pt);
\draw[fill] (0,0.7) node[left]{$e_1$};
\draw[fill] (0.5,1) node[above]{$e_2$};
\draw[fill] (0.5,0.3) node[below]{$e_3$};
\draw[fill] (1.5,0.7) node[above]{$e_4$};
\end{tikzpicture}
\end{center}
but the induced unitary operator on $H$ does not preserve the  the boundary conditions~\eqref{node1}: in fact, both $e_1$ and $e_4$ are adjacent to (say) $e_2$, but their node which is common to $e_2$ is different.

Is there room for a generalization of Theorem~\ref{main} invoking edge permutations that are more general than those induced by (node) symmetries but less general than edge symmetries? In the absolute majority of cases the answer is negative: a classical result going back to Whitney (see~\cite[Cor.~9.5b]{BehChaLes79}) states that in a connected simple graph $G$ with at least three nodes the three groups $A(G),A'(G),A^*(G)$ are pairwise isomorphic if and only if $G$ is different from each of the following graphs:
\begin{center}
\begin{tikzpicture}[style=thick]
\draw (0,0) -- (0,1.4) -- (1,0.7) -- cycle;
\draw (2,0.7) -- (1,0.7);
\draw[fill] (0,0) circle (2pt);
\draw[fill] (0,1.4) circle (2pt);
\draw[fill] (1,0.7) circle (2pt);
\draw[fill] (2,0.7) circle (2pt);
\end{tikzpicture}
\qquad
\begin{tikzpicture}[style=thick]
\draw (0,0) -- (0,1.4) -- (1,0.7) -- cycle;
\draw (0,0) -- (2,0.7) -- (0,1.4);
\draw[fill] (0,0) circle (2pt);
\draw[fill] (0,1.4) circle (2pt);
\draw[fill] (1,0.7) circle (2pt);
\draw[fill] (2,0.7) circle (2pt);
\end{tikzpicture}
\qquad
\begin{tikzpicture}[style=thick]
\draw (0,0) -- (0,1.4) -- (1,0.7) -- cycle;
\draw (0,0) -- (2,0.7) -- (0,1.4);
\draw (1,0.7) -- (2,0.7);
\draw[fill] (0,0) circle (2pt);
\draw[fill] (0,1.4) circle (2pt);
\draw[fill] (1,0.7) circle (2pt);
\draw[fill] (2,0.7) circle (2pt);
\end{tikzpicture}
\end{center}
\end{rem}

\bibliographystyle{plain} 
\bibliography{/home/delio/Dropbox/mate/literatur.bib}
\end{document}